\title{User Cooperation in Wireless Powered Communication Networks\footnote{The authors are with the Department of Electrical and Computer Engineering, National University of Singapore (e-mail: elejhs@nus.edu.sg, elezhang@nus.edu.sg).}}
\author{Hyungsik Ju and Rui Zhang}
\begin{document}

\maketitle \thispagestyle{empty}

\setlength{\baselineskip}{1.3\baselineskip}
\newtheorem{definition}{\underline{Definition}}[section]
\newtheorem{fact}{Fact}
\newtheorem{assumption}{Assumption}
\newtheorem{theorem}{\underline{Theorem}}[section]
\newtheorem{lemma}{\underline{Lemma}}[section]
\newtheorem{corollary}{\underline{Corollary}}[section]
\newtheorem{proposition}{\underline{Proposition}}[section]
\newtheorem{example}{\underline{Example}}[section]
\newtheorem{remark}{\underline{Remark}}[section]
\newtheorem{algorithm}{\underline{Algorithm}}[section]
\newcommand{\mv}[1]{\mbox{\boldmath{$ #1 $}}}

\begin{abstract}
This paper studies user cooperation in the emerging wireless powered communication network (WPCN) for throughput optimization. For the purpose of exposition, we consider a two-user WPCN, in which one hybrid access point (H-AP) broadcasts wireless energy to two distributed users in the downlink (DL) and the users transmit their independent information using their individually harvested energy to the H-AP in the uplink (UL) through time-division-multiple-access (TDMA). We propose user cooperation in the WPCN where the user which is nearer to the H-AP and has a better channel for DL energy harvesting as well as UL information transmission uses part of its allocated UL time and DL harvested energy to help to relay the far user's information to the H-AP, in order to achieve more balanced throughput. We maximize the weighted sum-rate (WSR) of the two users by jointly optimizing the time and power allocations in the network for both wireless energy transfer in the DL and wireless information transmission and relaying in the UL. Simulation results show that the proposed user cooperation scheme can effectively improve the achievable throughput in the WPCN with desired user fairness.
\end{abstract}

\section{Introduction}\label{Sec:Introduction}

Energy harvesting has recently received a great deal of attention in wireless communication since it provides virtually perpetual energy supplies to wireless networks through scavenging energy from the environment. In particular, harvesting energy from the far-field radio-frequency (RF) signal transmissions is a promising solution, which opens a new avenue for the unified study of wireless energy transfer (WET) and wireless information transmission (WIT) as radio signals are able to carry energy and information at the same time.

There are two main paradigms of research along this direction. One line of work aims to characterize the fundamental trade-offs in simultaneous WET and WIT with the same transmitted signal in the so-called simultaneous wireless information and power transfer (SWIPT) systems (see e.g., \cite{Zhou}-\cite{Zhang} and the references therein). Another line of research focuses on designing a new type of wireless network termed wireless powered communication network (WPCN) in which wireless terminals communicate using the energy harvested from WET (see e.g., \cite{Huang}-\cite{Ju_Throughput_for_WPCN}).

In our previous work \cite{Ju_Throughput_for_WPCN}, we studied a typical WPCN model, in which one hybrid access-point (H-AP) coordinates WET/WIT to/from a set of distributed users in the downlink (DL) and uplink (UL) transmissions, respectively. It has been shown in \cite{Ju_Throughput_for_WPCN} that the WPCN suffers from a so-called ``doubly near-far'' problem, which occurs when a far user from the H-AP receives less wireless energy than a near user in the DL, but needs to transmit with more power in the UL for achieving the same communication performance due to the doubled distance-dependent signal attenuation over both the DL and UL. As a result, unfair rate allocations among the near and far users are inured when their sum-throughput is maximized. In \cite{Ju_Throughput_for_WPCN}, we proposed to assign shorter/longer time to the near/far users in their UL WIT to solve the doubly near-far problem, which is shown to achieve more fair rate allocations among the users in a WPCN.

On the other hand, user cooperation is an effective way to improve the capacity, coverage, and/or diversity performance in conventional wireless communication systems. Assuming constant energy supplies at user terminals, cooperative communication has been thoroughly investigated in the literature under various protocols such as decode-and-forward and amplify-and-forward (see e.g., \cite{Sendonaris}, \cite{Liang} and the references therein). Recently, cooperative communication has been studied in energy harvesting wireless communication and SWIPT systems (see e.g. \cite{C_Huang}-\cite{Nasir}). However, how to exploit user cooperation in the WPCN to overcome the doubly near-far problem and further improve the network throughput and user fairness still remains unknown, which motivates this work.

\begin{figure}[!t]
   \centering
   \includegraphics[width=0.5\columnwidth]{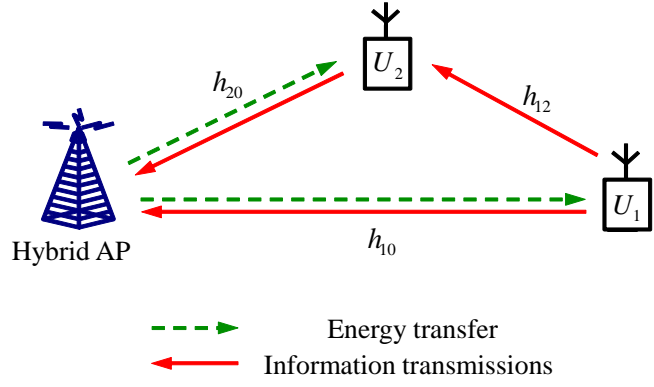}
   \caption{A two-user wireless powered communication network (WPCN) with DL WET and UL WIT via user cooperation.}
   \label{Fig_SystemModel}
\end{figure}

In this paper, we study user cooperation in the WPCN for throughput optimization. For the purpose of exposition, we consider a two-user WPCN, as shown in Fig. \ref{Fig_SystemModel}, where one H-AP broadcasts wireless energy to two distributed users with different distances in the DL, and the two users transmit their independent information using individually harvested energy to the H-AP in the UL through time-division-multiple-access (TDMA). To enable user cooperation, we propose that the near user which has a better channel than the far user for both DL WET and UL WIT uses part of its allocated UL time and DL harvested energy to first help to relay the information of the far user to the H-AP and then uses the remaining time and energy to transmit its own information. Under this protocol, we characterize the maximum weighted sum-rate (WSR) of the two users by jointly optimizing the time and power allocations in the network for both WET in the DL and WIT in the UL, subject to a given total time constraint. The achievable throughput gain in the WPCN by the proposed user cooperation scheme is shown both analytically and through simulations over the baseline scheme in \cite{Ju_Throughput_for_WPCN} without user cooperation.

The rest of this paper is organized as follows. Section \ref{SystemModel} presents the system model of the WPCN with user cooperation. Section \ref{ResourceAlloc} presents the time and power allocation problem to maximize the WSR in the WPCN, and compares the solutions and achievable throughput regions with versus without user cooperation. Section \ref{SimulationResult} presents more simulation results under practical fading channel setups. Finally, Section \ref{Conclusion} concludes the paper.

\section{System Model}\label{SystemModel}
As shown in Fig. \ref{Fig_SystemModel}, this paper considers a two-user WPCN with WET in the DL and WIT in the UL. The network consists of one hybrid access point (H-AP) and two users (e.g., sensors) denoted by $U_1$ and $U_2$, respectively, operating over the same frequency band. The H-AP and the users are assumed to be each equipped with one antenna. Furthermore, it is assumed that the H-AP has a constant energy supply (e.g., battery), whereas $U_1$ and $U_2$ need to replenish energy from the received signals broadcast by the H-AP in the DL, which is then stored and used to maintain their operations (e.g., sensing and data processing) and also communicate with the H-AP in the UL.

We assume without loss of generality that $U_2$ is nearer to the H-AP than $U_1$, and hence denote the distance between the H-AP and $U_1$, that between the H-AP and $U_2$, and that between the $U_1$ and $U_2$ as $D_{10}$, $D_{20}$, and $D_{12}$, respectively, with $D_{10} \ge D_{20}$. We also assume that $D_{12} \le D_{10}$ so that $U_2$ can more conveniently decode the information sent by $U_1$ than the H-AP, to motivate the proposed user cooperation to be introduced next. Assuming that the channel reciprocity holds between the DL and UL, then the DL channel from the H-AP to user $U_i$ and the corresponding reversed UL channel are both denoted by a complex random variable $\tilde{h}_{i0}$ with channel power gain $h_{i0} = {| \tilde{h}_{i0} |^2}$, $i = 1,2$, which in general should take into account the distance-dependent signal attenuation and long-term shadowing as well as the short-term fading. In addition, the channel from $U_1$ to $U_2$ is denoted by a complex random variable $\tilde{h}_{12}$ with channel power gain $h_{12} = {| \tilde{h}_{12} |^2}$. If only the distance-dependent signal attenuation is considered, we should have $h_{10} \le h_{12}$ and $h_{10} \le h_{20}$ due to the assumptions of $D_{10} \ge D_{20}$ and $D_{10} \ge D_{12}$. Furthermore, we consider block-based transmissions over quasi-static flat-fading channels, where $h_{10}$, $h_{20}$, and $h_{12}$ are assumed to remain constant during each block transmission time, denoted by $T$, but can vary from one block to another. In each block, it is further assumed that the H-AP has the perfect knowledge of $h_{10}$, $h_{20}$, and $h_{12}$, and $U_2$ knows perfectly $h_{12}$.

\begin{figure}[!t]
   \centering
   \includegraphics[width=0.5\columnwidth]{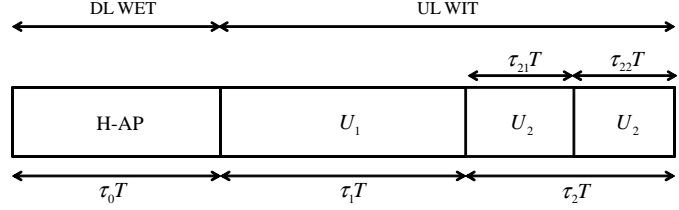}
   \caption{Transmission protocol for WPCN with user cooperation.}
   \label{Fig_FrameStructure}
\end{figure}

We extend the harvest-then-transmit protocol proposed in \cite{Ju_Throughput_for_WPCN} for the two-user WPCN to enable user cooperation, as shown in Fig. \ref{Fig_FrameStructure}. In each block, during the first $\tau_0 T$ amount of time, $0 < \tau_0 <1$, the H-AP broadcasts wireless energy to both $U_1$ and $U_2$ in the DL with fixed transmit power $P_0$. The far user $U_1$ then transmits its information with average power $P_1$ during the subsequent $\tau_1 T$ amount of time in the UL, $0 < \tau_1 <1$, using its harvested energy, and both the H-AP and $U_2$ decode the received signal from $U_1$. To overcome the doubly near-far problem \cite{Ju_Throughput_for_WPCN}, during the remaining $(1-\tau_0-\tau_1)T$ amount of time in each block, the near user $U_2$ first relays the far user $U_1$'s information and then transmits its own information to the H-AP using its harvested energy with average power $P_{21}$ over $\tau_{21} T$ amount of time and with average power $P_{22}$ over $\tau_{22} T$ amount of time, respectively, where $\tau_{21} + \tau_{22} = \tau_2$. Note that we have a total time constraint given by
\begin{equation}\label{Eq_SumTime}
   {\sum\limits_{i = 0}^2 {{\tau _i}}  = \tau_0 + \tau_1 + \tau_{21} + \tau_{22} \le 1.}
\end{equation}
For convenience, we noramlize $T = 1$ in the sequel without loss of generality.

During the DL phase, the transmitted complex baseband signal of the H-AP in one block of interest is denoted by an arbitrary random signal, $x_0$, satisfying $\mathbb E [ {{{\left| {{x_0}} \right|}^2}} ] = P_0$. The received signal at $U_i$, $i = 1,2$, is then expressed as
\begin{equation}\label{Eq_ReceivedSignal_Sensors}
   {{y_i^{ (0)  }} = \sqrt{h_{i\,0}} {x_0} + {z_i}, \,\,\,\,\,\, i = 1, 2,}
\end{equation}
where $y_r^{(k)}$ denotes the received signal at $U_r$ during $\tau_k$, with $k \in \left\{ {0,1,21,22} \right\}$ and $r \in \left\{ {0,1,2} \right\}$ (with $U_0$ denoting the H-AP in the sequel). In (\ref{Eq_ReceivedSignal_Sensors}), ${z_i}$ denotes the received noise at $U_i$ which is assumed to be ${z_i} \sim {\mathcal{CN}}\left( {0,\sigma_i^2} \right)$, $i = 1,2$, where ${\mathcal{CN}}( {\mu ,\sigma^2})$ stands for a circularly symmetric complex Gaussian (CSCG) random variable with mean $\mu$ and variance $\sigma^2$. It is assumed that $P_0$ is sufficiently large such that the energy harvested due to the receiver noise is negligible and thus is ignored. Hence, the amount of energy harvested by each user in the DL can be expressed as (assuming unit block time, i.e., $T = 1$)
\begin{equation}\label{Eq_HarvestedEnergy}
   {{E_i} = {\zeta_i}{P_0}{h_{i0}}{\tau _{0}}, \,\,\,\,\,\, i = 1, 2,}
\end{equation}
where $0 < \zeta_i < 1$, $i = 1, 2$, is the energy conversion efficiency at the receiver of $U_i$.

After the DL phase, each user uses a fixed portion of its harvested energy, denoted by $\eta_i$, with $0 < \eta_i \le 1$, $i = 1, 2$, for the UL transmissions, i.e., transmitting own information (by both $U_1$ and $U_2$) or relaying the other user's information (by $U_2$ only) to the H-AP. Within the first $\tau_1$ amount of time allocated to $U_1$, the average transmit power of $U_1$ is given by
\begin{equation}\label{Eq_TxPower_U1}
   {{P_1} = \frac{{{\eta _1}{E_1}}}{{{\tau _1}}} = {\eta _1}{\zeta _1}{P_0}{h_{10}}\frac{{{\tau _0}}}{{{\tau _1}}}.}
\end{equation}
We denote $x_1$ as the complex baseband signal transmitted by $U_1$ with power $P_1$, which is assumed to be Gaussian, i.e., ${x_1} \sim {\mathcal{CN}}\left( {0,P_1} \right)$. The received signals at the H-AP and $U_2$ in this UL slot for $U_1$ are expressed, respectively, as
\begin{equation}\label{Eq_ReceivedSignal_Slot1}
   {y_i^{(1)} = \sqrt {{h_{1i}}} \, {x_1} + {z_i}, \,\, i=0,\,2,}
\end{equation}
where ${z_0} \sim {\mathcal{CN}}\left( {0,\sigma_0^2} \right)$ denotes the receiver noise at the H-AP.

During the last $\tau_2$ amount of time allocated to $U_2$, the total energy consumed by $U_2$ for transmitting its own  information and relaying the decoded information for $U_1$ should be no larger than $\eta_2 E_2$, i.e.,
\begin{equation}\label{Eq_TxPower_U2}
   {\tau_{21} P_{21} + \tau_{22} P_{22} \le \eta_2 {\zeta_2}{P_0}{h_{20}}{\tau _{0}}.}
\end{equation}
We denote the complex basedband signals transmitted by $U_2$ for relaying $U_1$'s information and transmitting its own information as $x_{21}$ with power $P_{21}$ and $x_{22}$ with power $P_{22}$, respectively, where ${x_{21}} \sim {\mathcal{CN}}\left( {0,P_{21}} \right)$ and ${x_{22}} \sim {\mathcal{CN}}\left( {0,P_{22}} \right)$. During $\tau_{21}$ and $\tau_{22}$ amount of time allocated to $U_2$, the corresponding received signals at the H-AP can be expressed as
\begin{equation}\label{Eq_ReceivedSignal_Slot2}
   {y_0^{(2i)} = \sqrt {{h_{20}}} \, {x_{2i}} + {z_0}, \,\, i = 1, \,2.}
\end{equation}

Denote the time allocations to DL WET and UL WIT as $\boldsymbol{\tau} = [\tau_0, \,\, \tau_1, \,\, \tau_{21}, \,\, \tau_{22}]$, and the transmit power values of $U_1$ and $U_2$ for UL WIT as ${\bf{P}} = [P_1, \,\, P_{21}, \,\, P_{22}]$. From \cite{Liang}, the achievable rate of $U_1$ for a given pair of $\boldsymbol{\tau}$ and ${\bf{P}}$ can be expressed from (\ref{Eq_ReceivedSignal_Slot1}) and (\ref{Eq_ReceivedSignal_Slot2}) as
\begin{equation}\label{Eq_AchievableRate_U1}
   {R_1\left( { \boldsymbol{\tau}, {\bf{P}} } \right) = \min \left[ {{R_1^{(10)}}\left( { \boldsymbol{\tau}, {\bf{P}} } \right) + {R_1^{(20)}}\left( { \boldsymbol{\tau}, {\bf{P}} } \right),\,\,{R_1^{(12)}}\left( { \boldsymbol{\tau}, {\bf{P}} } \right)} \right],}
\end{equation}
with ${R_1^{(10)}}\left( { \boldsymbol{\tau}, {\bf{P}} } \right)$, ${R_1^{(20)}}\left( { \boldsymbol{\tau}, {\bf{P}} } \right)$, and ${R_1^{(12)}}\left( { \boldsymbol{\tau}, {\bf{P}} } \right)$ denoting the achievable rates of the transmissions from $U_1$ to the H-AP, from $U_2$ to the H-AP, and from $U_1$ to $U_2$, respectively, which are given by
\begin{equation}\label{Eq_R1_10}
   {{R_1^{(10)}}\left( { \boldsymbol{\tau}, {\bf{P}} } \right) = {\tau _1}{\log _2}\left( {1 + \frac{{{P_1}{h_{10}}}}{{{\sigma_0 ^2}}}} \right) ,}
\end{equation}
\begin{equation}\label{Eq_R1_12}
   {{R_1^{(12)}}\left( { \boldsymbol{\tau}, {\bf{P}} } \right) = {\tau _1}{\log _2}\left( {1 + \frac{{{P_1}{h_{12}}}}{{{\sigma_2 ^2}}}} \right) ,}
\end{equation}
\begin{equation}\label{Eq_R1_20}
   {{R_1^{(20)}}\left( { \boldsymbol{\tau}, {\bf{P}} } \right) = {\tau _{21}}\,{\log _2}\left( {1 + \frac{{{P_{21}}{h_{20}}}}{{{\sigma_0 ^2}}}} \right).}
\end{equation}
Furthermore, the achievable rate of $U_2$ is expressed from (\ref{Eq_ReceivedSignal_Slot2}) as
\begin{equation}\label{Eq_AchievableRate_U2}
   {{R_{2}}\left( { \boldsymbol{\tau}, {\bf{P}} } \right) = {\tau _{22}}\,{\log _2}\left( {1 + \frac{{{P_{22}}{h_{20}}}}{{{\sigma_0 ^2}}}} \right).}
\end{equation}

\section{Optimal Time and Power Allocations in WPCN with User Cooperation}\label{ResourceAlloc}
In this section, we study the joint optimization of the time allocated to the H-AP, $U_1$, and $U_2$, i.e., $\boldsymbol{\tau}$, and power allocations of the users, i.e., $\bf{P}$, to maximize the weighted sum-rate (WSR) of the two users in UL transmission.

Let $\boldsymbol{\omega} = [\omega_1, \,\,\, \omega_2]$ with $\omega_1$ and $\omega_2$ denoting the given non-negative rate weights for $U_1$ and $U_2$, respectively. The WSR maximization problem is then formulated from (\ref{Eq_AchievableRate_U1})-(\ref{Eq_AchievableRate_U2}) as
\[
   {\rm{(P1)}}:\,\,\,\,\mathop {\max }\limits_{{\boldsymbol{\tau}}, \,\, {\bf{P}}} \,\,\,{\omega _1}{R_1}\left( {{\boldsymbol{\tau}} ,{\bf{P}}} \right) + {\omega _2}{R_2}\left( {{\boldsymbol{\tau}} ,{\bf{P}}} \right)
\]
\[
   \,\,\, {\rm{s}}{\rm{.t}}{\rm{.}}\,\,\,\,{\rm{(1)}}, \,\,\, (4), \,\,\,{\rm{and}}\,\,\,{\rm{(6),}}
\]
\[
   \,\,\,\,\,\,\,\,\,\,\,\,\,\,\,\,\,\,\,\,\,\,\,\,\,\,\,\,\,\,\,\,\,\,\,\,\,\,\,\,\,\,\,\,\,\,\,\,\,\,\,\,\,\,\,\,\, {\tau _0} \ge 0, \,\,\, {\tau _1} \ge 0, \,\,\, {\tau _{21}} \ge 0, \,\,\, {\tau _{22}} \ge 0,
\]
\[
   \,\,\,\,\,\,\,\,\,\,\,\,\,\,\,\,\,\,\,\,\,\,\,\,\,\,\,\,\,\,\,\,\,\,\,\,\,\,\, {P_1} \ge 0, \,\,\, {P_{21}} \ge 0, \,\,\, {P_{22}} \ge 0.
\]
Notice that if we set $\tau_{21} = 0$ and $P_{21} = 0$, then (P1) reduces to the special case of WPCN without user cooperation studied in \cite{Ju_Throughput_for_WPCN}, i.e., the near user $U_2$ only transmits its own information to the H-AP, but does not help the far user $U_1$ for relaying its information to the H-AP.

Note that (P1) can be shown to be non-convex in the above form. To make this problem convex, we change the variables as $t_{21} = \frac{\tau_{21} P_{21}}{\eta_2 \zeta_2 h_{20} P_0}$ and $t_{22} = \frac{\tau_{22} P_{22}}{\eta_2 \zeta_2 h_{20} P_0}$. Since ${P_1} = {\eta _1}{\zeta _1}{P_0}{h_{10}}\frac{{{\tau _0}}}{{{\tau _1}}}$ as given in (\ref{Eq_TxPower_U1}), ${R_1^{(10)}}\left( { \boldsymbol{\tau}, {\bf{P}} } \right)$, ${R_1^{(12)}}\left( { \boldsymbol{\tau}, {\bf{P}} } \right)$, ${R_1^{(20)}}\left( { \boldsymbol{\tau}, {\bf{P}} } \right)$, and ${R_{2}}\left( { \boldsymbol{\tau}, {\bf{P}} } \right)$ in (\ref{Eq_R1_10})-(\ref{Eq_AchievableRate_U2}) can be re-expressed as functions of ${\bf{t}} = [{\boldsymbol{\tau}}, \,\, t_{21}, \,\, t_{22}]$, i.e.,
\begin{equation}\label{Eq_R1_10_new}
   {{R_1^{(10)}}\left( { {\bf{t}} } \right) = {\tau _1}{\log _2}\left( {1 + \rho_1^{(10)} \frac{{{\tau _0}}}{{{\tau _1}}}} \right),}
\end{equation}
\begin{equation}\label{Eq_R1_12_new}
   {{R_1^{(12)}}\left( { {\bf{t}} } \right) = {\tau _1}{\log _2}\left( {1 + \rho_1^{(12)} \frac{{{\tau _0}}}{{{\tau _1}}}} \right),}
\end{equation}
\begin{equation}\label{Eq_R1_20_New}
   {{R_1^{(20)}}\left( { {\bf{t}} } \right) = {\tau _{21}}\,{\log _2}\left( {1 + \rho_2\frac{t_{21}}{\tau_{21}}} \right),}
\end{equation}
\begin{equation}\label{Eq_AchievableRate_U2_New}
   {{R_{2}}\left( { {\bf{t}} } \right) = {\tau _{22}}\,{\log _2}\left( {1 + \rho_2 \frac{t_{22}}{\tau_{22}}} \right),}
\end{equation}
where $\rho_1^{(10)} = h_{10}^2 \frac{\eta_1 \zeta_1 P_0}{\sigma_0^2}$, $\rho_1^{(12)} = {h_{10}}{h_{12}} \frac{\eta_1 \zeta_1 P_0}{\sigma_2^2}$, and $\rho_2 = h_{20}^2  \frac{\eta_2 \zeta_2 P_0}{\sigma_0^2}$. Furthermore, we introduce a new variable $\bar R$ defined as ${\bar R} = \min\,\,[{R_1^{(10)}}\left( { {\bf{t}} } \right) + {R_1^{(20)}}\left( { {\bf{t}} } \right), \,\, {R_1^{(12)}}\left( { {\bf{t}} } \right)]$. It then follows that $\bar R \le {R_1^{(10)}}\left( { {\bf{t}} } \right) + {R_1^{(20)}}\left( { {\bf{t}} } \right)$ and $\bar R \le {R_1^{(12)}}\left( { {\bf{t}} } \right)$. Accordingly, (P1) can be equivalently reformulated as
\[\]
\begin{equation}\label{Eq_Problem_Epigraph}
   {\rm{(P2)}}:\,\,\,\,\mathop {\max }\limits_{\bar R,\,\, {\bf{t}}} \,\,\,\,\,{\omega _1}\bar R + {\omega _2}{R_2}\left( {{\bf{t}}} \right) \,\,\,\,\,\,\,\,\,\,\,\,\,\,\,\,\,\,\,\,\,\,\,\,\,
\end{equation}
\begin{equation}\label{Eq_SumTimeConstraint_New}
   \,\,\,\,\,\,\,\,\,\,\,\,\,\,\,\,\,\,\,\,\,\,\,\,\,\,\,\,\,\,\,\,\,\,\,\,\,\,\,\,\,\,\,\,\,\,\, {\rm{s}}{\rm{.t}}{\rm{.}} \,\,\,\,\,\,\, \tau_0 + \tau_1 + \tau_{21} + \tau_{22} \le 1,
\end{equation}
\begin{equation}\label{Eq_SumEnergyConstraint}
   \,\,\,\,\,\,\,\,\,\,\,\,\,\,\,\,\,\,\,\,\,\,\,\,\,\,\,\,\,\,\,\,\,\,\,\, {t_{21} + t_{22} \le \tau_0,} \,\,\,\,\,\,\,\,\,\,\,\,\,\,\,\,\,\,\,\,\,\,\,\,
\end{equation}
\begin{equation}\label{Eq_U1_Rate_Constraint}
   \,\,\,\,\,\,\,\,\,\,\,\,\,\,\,\,\,\,\,\,\,\,\,\,\,\,\,\,\,\,\,\,\,\,\,\,\,\,\,\,\,\,\,\,\,\,\,\,\,\,\,\,\,\,\,\,\,\,\,\,\,\,\,\, \bar R \le {R_1^{(10)}}\left( { {\bf{t}} } \right) + {R_1^{(20)}}\left( { {\bf{t}} } \right),
\end{equation}
\begin{equation}\label{Eq_U2_Rate_Constraint}
   \,\,\,\,\,\,\,\,\,\,\,\,\,\,\,\,\,\,\,\,\,\,\,\,\,\,\,\,\,\, \bar R \le {R_1^{(12)}}\left( { {\bf{t}} } \right), \,\,\,\,\,\,\,\,\,\,\,\,\,\,\,
\end{equation}
where the time constraint in (\ref{Eq_SumEnergyConstraint}) can be shown to be equivalent to the power constraint originally given in (\ref{Eq_TxPower_U2}). It is worth noting that $t_{21}$ and $t_{22}$ denote the amount of time in the DL slot duration $\tau_0$ in which the harvested energy by $U_2$ is later allocated to relay $U_1$'s information and transmit its own information in the UL, respectively. By introducing the new variables $t_{21}$ and $t_{22}$ in $\bf{t}$ and $\bar{R}$, joint time and power allocation in problem (P1) is converted to time allocation only in problem (P2).

Note that ${R_1^{(10)}}\left( { {\bf{t}} } \right)$, ${R_1^{(20)}}\left( { {\bf{t}} } \right)$, ${R_1^{(12)}}\left( { {\bf{t}} } \right)$, and ${R_{2}}\left( { {\bf{t}} } \right)$ are all monotonically increasing functions over each element of $(\tau_0,\tau_1)$, $(\tau_0,\tau_1)$, $(t_{21},\tau_{21})$, and $(t_{22},\tau_{22})$, respectively. Let the optimal solution of (P2) be denoted by ${\bf{t}}^* = [{\boldsymbol{\tau}}^*, t_{21}^*,t_{22}^*]$ $=$ $[\tau_0^*, \tau_1^*, \tau_{21}^*, \tau_{22}^*, t_{21}^*,t_{22}^*]$. Then, it can be easily verified that $t_{21}^* + t_{22}^* = \tau_0^*$ must hold (otherwise, we can always increase ${R_{2}}\left( { {\bf{t}} } \right)$ by increasing $t_{22}$ to improve the weighted sum-rate). Similarly, it can also be verified that
\begin{equation}\label{Eq_Prop_Opt_Inequality}
   {{R_1^{(10)}}\left( { {\bf{t}}^* } \right) + {R_1^{(20)}}\left( { {\bf{t}}^* } \right) \le {R_1^{(12)}}\left( {  {\bf{t}}^* } \right),}
\end{equation}
since, otherwise, we can allocate part of $\tau_{21}$ (or $t_{21}$) to $\tau_{22}$ (or $t_{22}$) until the equality holds, which will result in an increased ${R_2}\left( { {\boldsymbol{\tau}}, {\bf{t}} } \right)$ without reducing ${R_1}\left( { {\boldsymbol{\tau}}, {\bf{t}} } \right)$.

\begin{lemma}\label{Lemma_Concavity}
   ${R_1^{(10)}}\left( { {\bf{t}} } \right)$, ${R_1^{(12)}}\left( { {\bf{t}} } \right)$, ${R_1^{(20)}}\left( { {\bf{t}} } \right)$, and ${R_{2}}\left( { {\bf{t}} } \right)$ are all concave functions of $\bf{t}$.
\end{lemma}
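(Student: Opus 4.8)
The plan is to recognize that all four rate expressions share a single canonical form and then appeal to a standard concavity-preserving operation. Writing $g(u) = \log_2(1 + \rho u)$ for a constant $\rho > 0$ and $u \ge 0$, each of the four functions is exactly the \emph{perspective} of $g$ evaluated at an appropriate pair of coordinates of $\mathbf{t}$. Concretely, $R_1^{(10)}(\mathbf{t})$ and $R_1^{(12)}(\mathbf{t})$ equal $\tau_1\, g(\tau_0/\tau_1)$ with $\rho = \rho_1^{(10)}$ and $\rho = \rho_1^{(12)}$, respectively; $R_1^{(20)}(\mathbf{t})$ equals $\tau_{21}\, g(t_{21}/\tau_{21})$ and $R_2(\mathbf{t})$ equals $\tau_{22}\, g(t_{22}/\tau_{22})$, both with $\rho = \rho_2$. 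Thus it suffices to prove the single statement: for any $\rho > 0$, the map $f(x,y) = y\, g(x/y) = y \log_2(1 + \rho x/y)$ is jointly concave on $\{ x \ge 0,\ y > 0 \}$.

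First I would verify that $g(u) = \log_2(1 + \rho u)$ is concave in $u$, since it is the composition of the concave function $\log_2(\cdot)$ with the affine map $u \mapsto 1 + \rho u$, and composing a concave function with an affine map preserves concavity. Next I would invoke the standard result that the perspective operation $f(x,y) = y\, g(x/y)$ preserves concavity: whenever $g$ is concave, its perspective is jointly concave on the region $y > 0$. This immediately yields joint concavity of $f$ in its two arguments, and hence of each of the four rate functions in its two active coordinates.

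To keep the argument self-contained, I would instead (or in addition) verify concavity by a direct Hessian computation. Dropping the harmless positive factor $1/\ln 2$, the Hessian of $y \ln(1 + \rho x/y)$ works out to
\begin{equation}
   H = \frac{\rho^2}{(y + \rho x)^2} \begin{pmatrix} -y & x \\ x & -x^2/y \end{pmatrix}.
\end{equation}
The scalar prefactor is positive, the $(1,1)$ entry of the inner matrix is $-y < 0$, and its determinant is $(-y)(-x^2/y) - x^2 = 0$; hence the inner matrix is negative semidefinite (it is rank one, with nonpositive trace and zero determinant), so $H \preceq 0$ and $f$ is concave. This concrete calculation is routine and would serve as the backbone if a reviewer prefers an explicit argument over the perspective lemma.

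Finally I would lift joint concavity in the two relevant arguments to concavity in the full vector $\mathbf{t}$: each rate function is the composition of the corresponding two-variable concave $f$ with the affine coordinate-projection sending $\mathbf{t}$ to its two active coordinates, and composition with an affine map again preserves concavity. The only genuine subtlety, and the step I would handle most carefully, is the domain: both the perspective argument and the Hessian require the denominator variable ($\tau_1$, $\tau_{21}$, or $\tau_{22}$) to be strictly positive, so I would note that each function extends continuously by its natural limiting value $0$ to the boundary where that variable vanishes, which preserves concavity on the closed feasible region.
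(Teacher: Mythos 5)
Your proposal is correct, and it actually contains the paper's own proof as its fallback branch. The paper proceeds exactly as your ``backbone'' does: it reduces each rate to the two-variable form $x_1\log_2\left(1+\alpha\,x_2/x_1\right)$ (its Lemma \ref{Lemma_Concavity_General}), computes the Hessian explicitly, and concludes negative semidefiniteness; your matrix $H$ is precisely the paper's $\left[d_{i,j}\right]$ in (\ref{Eq_App_Hessian_function_g}) after swapping the roles of the two variables, and both computations agree. The genuine difference is your primary argument: identifying each rate as the \emph{perspective} of the scalar concave function $u\mapsto\log_2(1+\rho u)$ and invoking the standard fact that the perspective operation preserves concavity. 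The paper never takes this route, even though the result is available in its own reference \cite{ConvexOptimization}; your version is shorter, computation-free, and generalizes immediately to any concave $g$, whereas the explicit Hessian (the paper's choice) is self-contained and checkable line by line. The two Hessian arguments differ only in how semidefiniteness is certified: the paper completes a square in the quadratic form, showing $\mathbf{v}^{T}\nabla^{2}g\,\mathbf{v}=-\frac{\alpha^2}{x_1\left(1+\alpha x_2/x_1\right)^2}\left(\frac{x_2}{x_1}v_1-v_2\right)^2\le 0$, while you use the trace/determinant test for a $2\times 2$ symmetric matrix; both are valid. You are also slightly more careful than the paper on one point: the paper defines the two-variable function to be $0$ when the time variable vanishes but verifies concavity only on the open region $x_1>0$, whereas you state explicitly that the continuous extension by the limiting value $0$ preserves concavity on the closed feasible set.
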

\begin{proof}
   Please refer to Appendix \ref{App_Proof_Lemma_Concavity}.
\end{proof}

From Lemma \ref{Lemma_Concavity}, it follows that the objective function of (P2) is a concave function of $\bf{t}$, and so are the functions at the right-hand side of both (\ref{Eq_U1_Rate_Constraint}) and (\ref{Eq_U2_Rate_Constraint}). Furthermore, the constraints in (\ref{Eq_SumTimeConstraint_New}) and (\ref{Eq_SumEnergyConstraint}) are both affine. Therefore, problem (P2) is a convex optimization problem, and furthermore it can be verified that (P2) satisfies the Slater's condition \cite{ConvexOptimization}; hence, it can be solved by the Lagrange duality method, shown as follows. From (\ref{Eq_Problem_Epigraph})-(\ref{Eq_U2_Rate_Constraint}), the Lagrangian of (P2) is given by
\[
    \mathcal L\left( {\bar R, {\bf{t}}, \boldsymbol{\lambda} } \right) = {\omega _1}\bar R + {\omega _2}{R_2}\left( {{\bf{t}}} \right)  - \lambda_1 \left( \tau_0 + \tau_1 + \tau_{21} + \tau_{22} - 1 \right) - \lambda_2 \left( {t_{21} + t_{22} - \tau_0} \right)\,\,\,\,\,\,\,\,\,\,\,\,\,\,\,\,\,\,
\]
\begin{equation}\label{Eq_Lagrangian}
     - \lambda_3 \left( \bar R - {R_1^{(10)}}\left( {  {\bf{t}} } \right) - {R_1^{(20)}}\left( { {\bf{t}} } \right) \right) - \lambda_4 \left( \bar R - {R_1^{(12)}}\left( { {\bf{t}} } \right) \right),
\end{equation}
where ${\boldsymbol{\lambda}} = [\lambda_1, \,\, \lambda_2, \,\, \lambda_3, \,\, \lambda_4]$ denote the Lagrange multipliers associated with the constraints in (\ref{Eq_SumTimeConstraint_New}), (\ref{Eq_SumEnergyConstraint}), (\ref{Eq_U1_Rate_Constraint}), and (\ref{Eq_U2_Rate_Constraint}), respectively. Notice that $\lambda_3 + \lambda_4 \ge \omega_1$ must hold; otherwise, the Lagrnagian will go unbounded from above with $\bar{R}\rightarrow \infty$. The dual function of problem (P2) is then given by
\begin{equation}\label{Eq_DualFunction}
   \mathcal G \left( \boldsymbol{\lambda} \right) = \mathop {\max }\limits_{ {{\bf{t}}} \in \mathcal D, \, \bar{R} \ge 0} \,\,\, \mathcal L\left( \bar R, {{\bf{t}},{\boldsymbol{\lambda}}} \right),
\end{equation}
where $\mathcal D$ is the feasible set of ${\bf{t}}$ specified by $t_{21} \ge 0$, $t_{22} \ge 0$, and ${\boldsymbol{\tau}} \ge 0$ (`$\ge$' here denotes the component-wise inequality). The dual problem of (P2) is thus given by $\mathop {\min }\limits_{{\boldsymbol{\lambda}}  \ge 0, \lambda_3+\lambda_4  \ge \omega_1 } \,\,\, \mathcal G\left( {\boldsymbol{\lambda}} \right)$. The optimal solution ${\bf{t}}^*$ can be obtained if the optimal dual solution $\boldsymbol{\lambda}^*$ is found by solving the dual problem of (P2).

\begin{proposition}\label{Proposition_Opt_Solution}
   Given positive weights $\omega_1 > 0$ and $\omega_2 > 0$, the optimal solution to (P2), ${\bf{t}}^* = [{\boldsymbol{\tau}^*}, t_{21}^*, t_{22}^*]$, is given by
   \begin{equation}\label{Eq_Prop_Opt_tau}
      {{\boldsymbol{\tau}^*} = \left[ \frac{\left( {\sqrt{b^2 - 4ac}} - b \right)\tau_1^*}{2a} , \,\,\, \frac{\tau_0^*}{z_1^*}, \,\,\, \frac{\rho_2 t_{21}^*}{z_{21}^*}, \,\,\, \frac{\rho_2 t_{22}^*}{z_{22}^*} \right],}
   \end{equation}
   \begin{equation}\label{Eq_Prop_Opt_E}
      {\left[t_{21}^*,t_{22}^*\right] = \left[ \left( \frac{\lambda_3^*\tau_{21}^*}{\lambda_2^* \ln 2} - \frac{\tau_{21}^*}{\rho_2} \right)^{+} , \left( \frac{\omega_2^*\tau_{22}^*}{\lambda_2^* \ln 2} - \frac{\tau_{22}^*}{\rho_2} \right)^{+}  \right],}
   \end{equation}
   with $(x)^{+} \buildrel \Delta \over =  \max(0,x)$, and $\lambda_1^* > 0$, $\lambda_2^* > 0$, $\lambda_3^* \ge 0$, and $\lambda_4^* \ge 0$ denoting the optimal dual solutions. Moreover, $a$, $b$, and $c$ in (\ref{Eq_Prop_Opt_tau}) are given, respectively, by
   \begin{equation}\label{Eq_Prop_a}
      {a = \left( \lambda_1^* - \lambda_2^* \right) \rho_1^{(10)} \rho_1^{(12)},}
   \end{equation}
   \begin{equation}\label{Eq_Prop_b}
      {b = \left( \lambda_1^* - \lambda_2^* \right) \left( \rho_1^{(10)} + \rho_1^{(12)}\right) - \omega_1 \rho_1^{(10)} \rho_1^{(12)},}
   \end{equation}
   \begin{equation}\label{Eq_Prop_c}
      {c = \lambda_1^* - \lambda_2^* - \lambda_3^* \rho_1^{(10)} - \lambda_4 \rho_1^{(12)}.}
   \end{equation}
   Finally, $z_1^*$, $z_{21}^*$, and $z_{22}^*$ in (\ref{Eq_Prop_Opt_tau}) are solutions of $\lambda_3^* f( \rho_1^{(10)} z ) + \lambda_4^* f( \rho_1^{(12)} z ) = \lambda_1^* \ln 2$, $f\left( z \right) = \frac{\lambda_1^* \ln 2}{\lambda_3^*}$, and $f\left( z \right) = \frac{\lambda_1^* \ln 2}{\omega_2}$, respectively, where
   \begin{equation}\label{Eq_Prop_Function_z}
      {f\left( z \right) \buildrel \Delta \over =   \ln \left( {1 + z} \right) - \frac{z}{{1 + z}}.}
   \end{equation}
\end{proposition}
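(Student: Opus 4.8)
The plan is to solve (P2) by convex duality. By Lemma~\ref{Lemma_Concavity} the objective is concave and the constraints in (\ref{Eq_SumTimeConstraint_New})--(\ref{Eq_U2_Rate_Constraint}) are concave/affine, and since Slater's condition holds, the KKT conditions attached to the Lagrangian (\ref{Eq_Lagrangian}) are both necessary and sufficient for optimality. I would therefore characterize ${\bf t}^*$ entirely through stationarity of $\mathcal L$ together with complementary slackness and read off the stated formulas, recovering the multipliers $\lambda_1^*,\dots,\lambda_4^*$ from the dual problem.

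The computational heart is the differentiation of the perspective-type rate functions. Writing each rate generically as $g(u,v)=v\log_2(1+\rho u/v)$ and setting $z=\rho u/v$, I would record the two partials $\partial g/\partial u = \rho/[\ln 2\,(1+z)]$ and $\partial g/\partial v = f(z)/\ln 2$, where $f$ is exactly the function in (\ref{Eq_Prop_Function_z}). This single observation is what makes $f$ appear in the solution and organizes all seven stationarity equations. Applying it: stationarity in $\bar R$ gives $\lambda_3+\lambda_4=\omega_1$, consistent with the boundedness remark preceding the proposition; stationarity in $\tau_{21}$ and $\tau_{22}$ gives $f(z_{21})=\lambda_1 \ln 2/\lambda_3$ and $f(z_{22})=\lambda_1 \ln 2/\omega_2$, which are precisely the defining equations for $z_{21}^*$ and $z_{22}^*$; and stationarity in $\tau_1$ gives $\lambda_3 f(\rho_1^{(10)}z)+\lambda_4 f(\rho_1^{(12)}z)=\lambda_1 \ln 2$ with $z=\tau_0/\tau_1$, which defines $z_1^*$ and hence $\tau_1^*=\tau_0^*/z_1^*$.

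For the power (equivalently, $t$) variables I would use the other partial. Stationarity in $t_{21}$ and $t_{22}$ yields $\lambda_3\rho_2/[\ln 2\,(1+z_{21})]=\lambda_2$ and $\omega_2\rho_2/[\ln 2\,(1+z_{22})]=\lambda_2$ whenever the variable is interior; solving for $z_{21},z_{22}$ and using $z_{2i}=\rho_2 t_{2i}/\tau_{2i}$ produces $t_{2i}^*$ in closed form, with the complementary-slackness/boundary case $t_{2i}^*=0$ accounting for the $(\cdot)^+$ operator in (\ref{Eq_Prop_Opt_E}) and, through $\tau_{2i}^*=\rho_2 t_{2i}^*/z_{2i}^*$, the last two entries of (\ref{Eq_Prop_Opt_tau}). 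The quadratic in (\ref{Eq_Prop_Opt_tau}) then comes from the $\tau_0$ stationarity: substituting $z=\tau_0/\tau_1$ into $-\lambda_1+\lambda_2+\lambda_3\rho_1^{(10)}/[\ln 2\,(1+\rho_1^{(10)}z)]+\lambda_4\rho_1^{(12)}/[\ln 2\,(1+\rho_1^{(12)}z)]=0$ and clearing the two denominators gives a quadratic $az^2+bz+c=0$ whose coefficients, after the common $\ln 2$ scaling and the substitution $\lambda_3+\lambda_4=\omega_1$, are those in (\ref{Eq_Prop_a})--(\ref{Eq_Prop_c}); selecting the positive root yields $\tau_0^*=(\sqrt{b^2-4ac}-b)\tau_1^*/(2a)$. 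Finally I would argue $\lambda_1^*,\lambda_2^*>0$ by noting that both the total-time constraint (\ref{Eq_SumTimeConstraint_New}) and the energy constraint (\ref{Eq_SumEnergyConstraint}) must be active at the optimum, since any slack could be converted into extra transmit time or power to strictly raise the weighted sum-rate, so their shadow prices are strictly positive.

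The step I expect to be the main obstacle is reconciling the pair $(\tau_0,\tau_1)$: because $R_1^{(10)}$ and $R_1^{(12)}$ depend on these two variables only through the ratio $z=\tau_0/\tau_1$ (up to the common scale $\tau_1$), the $\tau_0$- and $\tau_1$-stationarity conditions both collapse to single equations in $z$ — the rational one giving the quadratic and the transcendental one defining $z_1^*$. I must verify these are mutually consistent at the optimal dual point, so that the positive quadratic root indeed coincides with $z_1^*$, and I must carry the $\ln 2$ factors through the denominator clearing carefully, as they are easy to misplace. The active-set analysis for the relay-rate constraint (\ref{Eq_U2_Rate_Constraint}) also needs attention, since its multiplier $\lambda_4^*$ may vanish, in line with the inequality (\ref{Eq_Prop_Opt_Inequality}). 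The remaining pieces — the perspective differentiation and the algebraic inversions — are routine once this structure is in place.
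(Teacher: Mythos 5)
Your proposal is correct and follows essentially the same route as the paper's proof: both invoke Slater's condition to make the KKT conditions necessary and sufficient, take stationarity of the Lagrangian in $\bar R, \tau_0, \tau_1, \tau_{21}, \tau_{22}, t_{21}, t_{22}$, use the substitutions $z_1 = \tau_0/\tau_1$ and $z_{2i} = \rho_2 t_{2i}/\tau_{2i}$ to obtain the quadratic in $z_1$ and the transcendental equations in $f(\cdot)$, invert the $t_{2i}$-stationarity conditions for (\ref{Eq_Prop_Opt_E}), and justify $\lambda_1^*, \lambda_2^* > 0$ via activeness of (\ref{Eq_SumTimeConstraint_New}) and (\ref{Eq_SumEnergyConstraint}). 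Your organizing observation about the two partials of the perspective function $v\log_2(1+\rho u/v)$, and your caution about the $\ln 2$ factors (which the paper's stated coefficients (\ref{Eq_Prop_b})--(\ref{Eq_Prop_c}) indeed appear to absorb or drop) and about consistency of the two characterizations of $\tau_0^*/\tau_1^*$ (automatic since all KKT conditions hold simultaneously at the optimum), are refinements of, not departures from, the paper's argument.
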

\begin{proof}
   Please refer to Appendix \ref{App_Proof_Prop_Opt_Solution}.
\end{proof}

\begin{table}[!t]
\renewcommand{\arraystretch}{1.3}
\caption{Algorithm to solve $(\rm{P1}).$}
\label{Table_P1} \centering
   \begin{tabular}{|p{3.4in}|}
   \hline
      1)  \textbf{Initialize} $\boldsymbol{\lambda} \ge 0$

      2) \textbf{Repeat}
         \begin{itemize}
               \item[1.] Initialize $k=0$, ${{\boldsymbol{\tau}}} = {{\boldsymbol{\tau}}}^{(0)}$, ${{\bf{t}}} = {{\boldsymbol{\tau}}}^{(0)}$.

               \item[2.] \textbf{Repeat}
                  \begin{itemize}
                     \item[(1)] Obtain $[\tau_0^{(k+1)}, \,\, t_{21}^{(k+1)}, \,\, t_{22}^{(k+1)}]$ from (\ref{Eq_Prop_Opt_tau}) and (\ref{Eq_Prop_Opt_E}) \

                         with given $[\tau_1^{(k)}, \,\, \tau_{21}^{(k)}, \,\, \tau_{22}^{(k)}]$.

                     \item[(2)] Obtain $[\tau_1^{(k+1)}, \,\, \tau_{21}^{(k+1)}, \,\, \tau_{22}^{(k+1)}]$ from (\ref{Eq_Prop_Opt_tau}) and (\ref{Eq_Prop_Opt_E}) \

                         with given $[\tau_0^{(k+1)}, \,\, t_{21}^{(k+1)}, \,\, t_{22}^{(k+1)}]$.
                  \end{itemize}

               \item[3.] \textbf{until} ${{\bf{t}}}^{\star}$ converges to a predetermined accuracy.

               \item[4.] Compute ${\bar R}^{\star} = \min\,\,[{R_1^{(10)}}\left( { {\bf{t}}^{\star} } \right) + {R_1^{(20)}}\left( { {\bf{t}}^{\star} } \right), \,\, {R_1^{(12)}}\left( { {\bf{t}}^{\star} } \right)]$.

               \item[5.] Update $\boldsymbol{\lambda}$ subject to $\lambda_3+\lambda_4 \ge \omega_1$ using the ellipsoid method and the subgradient of $\mathcal{G} \left( {\boldsymbol{\lambda}} \right)$ given by (\ref{Eq_Subgradient1})-(\ref{Eq_Subgradient4}).
         \end{itemize}

      3) \textbf{Until} Stopping criteria of the ellipsoid method is met.

      4) \textbf{Set} $P_{21}^* = \eta_2 \zeta_2 h_{20} P_0 \frac{t_{21}^*}{\tau_{21}^*}$  and $P_{22}^* = \eta_2 \zeta_2 h_{20} P_0 \frac{t_{22}^*}{\tau_{22}^*}$.
            \\
   \hline
   \end{tabular}
\end{table}

According to Proposition \ref{Proposition_Opt_Solution}, we can obtain ${\bf{t}}^*$ as follows. Denote ${\bf{t}}^{\star}$ and $\bar R ^{\star}$ as the maximizer of $\mathcal L\left( {\bar R, {\bf{t}}, \boldsymbol{\lambda} } \right)$ in (\ref{Eq_Lagrangian}) for a given $\boldsymbol{\lambda}$. We can first obtain ${\bf{t}}^{\star}$ by iteratively optimizing between $[\tau_0, \,\, t_{21}, \,\, t_{22}]$ and $[\tau_1, \,\, \tau_{21}, \,\, \tau_{22}]$ using (\ref{Eq_Prop_Opt_tau}) and (\ref{Eq_Prop_Opt_E}) with one of them being fixed at one time, until they both converge. Then we compute $\bar R^{\star} = \min [{R_1^{(10)}}\left( { {\bf{t}}^{\star} } \right) + {R_1^{(20)}}\left( {{\bf{t}}^{\star} } \right), {R_1^{(12)}}\left( { {\bf{t}}^{\star} } \right)]$. With $\mathcal G (\boldsymbol{\lambda})$ obtained for each given $\boldsymbol{\lambda}$, the optimal $\boldsymbol{\lambda}^*$ minimizing $\mathcal G (\boldsymbol{\lambda})$ can then be found by updating $\boldsymbol{\lambda}$ using sub-gradient based algorithms, e.g., the ellipsoid method \cite{LectureNote}, with the sub-gradient of $\mathcal G (\boldsymbol{\lambda})$, denoted as $\boldsymbol{\nu} = [\nu_1, \,\, \nu_2, \,\, \nu_3, \,\, \nu_4]$, given by
\[\]
\begin{equation}\label{Eq_Subgradient1}
   \nu_1 = \tau_0^{\star} + \tau_1^{\star} + \tau_{21}^{\star} + \tau_{22}^{\star} - 1,
\end{equation}
\begin{equation}\label{Eq_Subgradient2}
   \nu_2 = t_{21}^{\star} + t_{22}^{\star} - \tau_0^{\star},
\end{equation}
\begin{equation}\label{Eq_Subgradient3}
   \nu_3 = \bar R^{\star} - {R_1^{(10)}}\left( { {\bf{t}}^{\star} } \right) + {R_1^{(20)}}\left( { {\bf{t}}^{\star} } \right),
\end{equation}
\begin{equation}\label{Eq_Subgradient4}
   \nu_4 = \bar R^{\star} - {R_1^{(12)}}\left( { {\bf{t}}^{\star} } \right).
\end{equation}
Once ${\boldsymbol{\lambda}}^{*}$ and the corresponding ${\bf{t}}^* = {\bf{t}}^{\star}$ are obtained, the optimal power allocation solution at $U_2$ for (P1) is obtained as $P_{21}^* = \eta_2 \zeta_2 h_{20} P_0 \frac{t_{21}^*}{\tau_{21}^*}$  and $P_{22}^* = \eta_2 \zeta_2 h_{20} P_0 \frac{t_{22}^*}{\tau_{22}^*}$. To summarize, one algorithm to solve problem (P1) is given in Table \ref{Table_P1}.

Fig. \ref{Fig_ThroughputRegion_k_05_Various_a} shows the achievable throughput regions of the two-user WPCN with user cooperation by solving (P1) with different user rate weights as compared to that by the baseline scheme in \cite{Ju_Throughput_for_WPCN} without user cooperation, for different values of path-loss exponent, $\alpha$. It is assumed that $D_{10} = 10$m, and $D_{12} = D_{20} = 5$m. The channel power gains in the network are modeled as $h_{ij} = 10^{-3} \theta_{ij} D_{ij}^{-\alpha}$, $ij \in \left\{ {10,20,12} \right\}$, for distance $D_{ij}$ in meter, with the same path-loss exponent $\alpha$ and $30$dB signal power attenuation for both users at a reference distance of $1$m, where $\theta_{ij}$ represents the additional channel short-term fading. We ignore the effects of short-term fading in this case by setting $\theta_{10} = \theta_{20} = \theta_{12} =1$, to focus on the effect of the doubly near-far problem due to distance-dependent attenuation only. Moreover, it is assumed that $P_0 = 30$dBm and the bandwidth is $1$MHz. The AWGN at the receivers of the H-AP and $U_2$ is assumed to have a white power spectral density of $-160$dBm/Hz. For each user, it is assumed that $\eta_1 = \eta_2 = 0.5$ and $\zeta_1 = \zeta_2 = 0.5$.

\begin{figure}[!t]
   \centering
   \includegraphics[width=0.7\columnwidth]{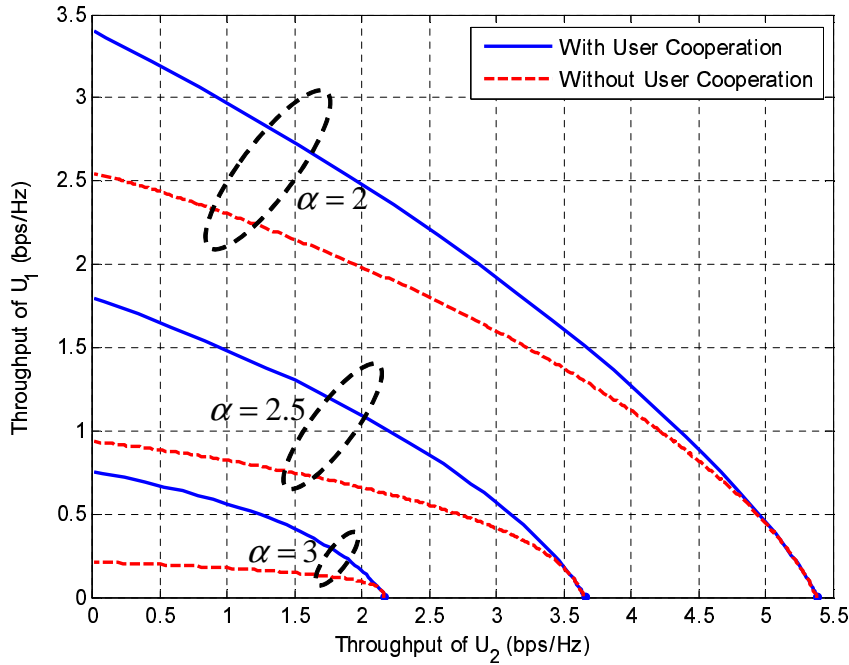}
   \caption{Throughput region comparison for WPCN with versus without user cooperation.}
   \label{Fig_ThroughputRegion_k_05_Various_a}
\end{figure}

From Fig. \ref{Fig_ThroughputRegion_k_05_Various_a}, it is observed that the throughput region of WPCN with user cooperation is always larger than that without user cooperation, which is expected as the latter case only corresponds to a suboptimal solution of (P1) in general. Let $\delta = R_{1,\max}^{(wc)}/R_{1,\max}^{(nc)}$, with $R_{1,\max}^{(wc)}$ and $R_{1,\max}^{(nc)}$ denoting the maximum achievable throughput of the far user $U_1$ in the WPCN with and without user cooperation, respectively. It is then inferred from Fig. \ref{Fig_ThroughputRegion_k_05_Various_a} that $\delta = 1.33$, $1.92$, and $3.60$ when $\alpha = 2$, $2.5$, and $3$, respectively, which implies that user cooperation in the WPCN is more beneficial in improving the far user's rate as $\alpha$ increases, i.e., when the doubly near-far problem is more severe. This is because the achievable rate for the direct link from $U_1$ to the H-AP decreases more significantly than that of the other two links over $\alpha$.

\begin{figure}[!t]
   \centering
   \includegraphics[width=0.7\columnwidth]{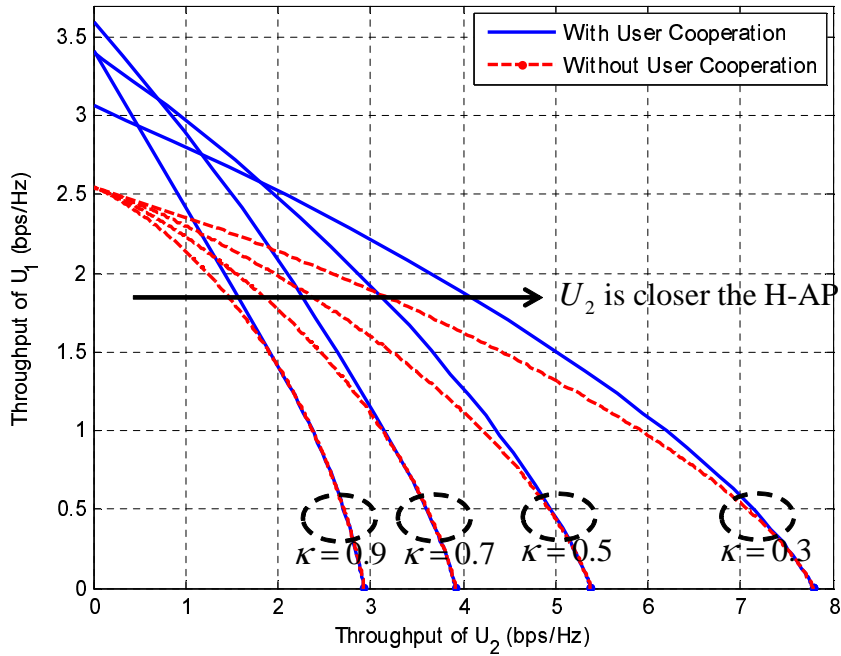}
   \caption{Throughput region comparison for WPCN with versus without user cooperation with $\alpha = 2$.}
   \label{Fig_ThroughputRegion_a_2_Various_k}
\end{figure}

Next, Fig. \ref{Fig_ThroughputRegion_a_2_Various_k} compares the achievable throughput regions of WPCN with versus without user cooperation with $\alpha = 2$. In this case, the H-AP and the two users are assumed to lie on a straight  line with $D_{20} = \kappa D_{10}$ and $D_{12} = (1-\kappa)D_{10}$, $0 < \kappa < 1$. It is observed that when $\kappa$ is not large (i.e., $\kappa \le 0.7$), $R_{1,\max}^{(wc)}$ decreases with decreasing $\kappa$. This is because when the near user $U_2$ moves more away from the far user $U_1$ (and thus closer to the H-AP), the degradation of $R_1^{(12)}({\bf{t}}^*)$ for the $U_1$-to-$U_2$ link with decreasing $\kappa$ is more significant than the improvement in $R_1^{(20)}({\bf{t}}^*)$ of the $U_2$-to-H-AP link since $R_1^{(10)}({\bf{t}}^*) + R_1^{(20)}({\bf{t}}^*) \le R_1^{(12)}({\bf{t}}^*)$ with the optimal time allocations ${\bf{t}}^*$. On the other hand, when $\kappa$ is larger than a certain threshold (e.g., $\kappa=0.9$), $R_{1,\max}^{(wc)}$ decreases with increasing $\kappa$ since in this case not only the far user $U_1$, but also the relatively nearer user $U_2$ suffers from the significant signal attenuation from/to the H-AP.

\begin{figure}[!t]
   \centering
   \includegraphics[width=0.7\columnwidth]{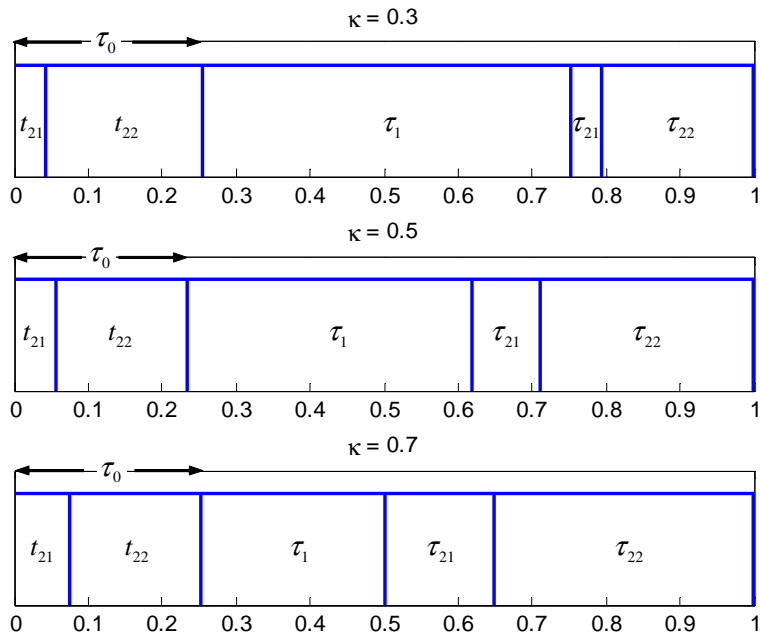}
   \caption{Optimal time allocations in ${\bf{t}}^*$ for different values of $\kappa$ when $\alpha = 2$ and $R_1({\bf{t}}^*) = R_2({\bf{t}}^*)$.}
   \label{Fig_CommonThroughput_TimeAlloc}
\end{figure}

Finally, Fig. \ref{Fig_CommonThroughput_TimeAlloc} shows the optimal time allocations in ${\bf{t}}^*$ for (P2) when $R_1({\bf{t}}^*) = R_2({\bf{t}}^*)$, i.e., the common-throughput \cite{Ju_Throughput_for_WPCN} is maximized,\footnote{The common-throughput can be obtained by searching over $\boldsymbol{\omega}$, for which one algorithm is provided in \cite{Ju_Throughput_for_WPCN}.} with $\alpha = 2$ and $\kappa = 0.3$, $0.5$, $0.7$. It is observed that $\tau_1^*$ decreases but both $\tau_{21}^*$ and $\tau_{22}^*$ increase with increasing $\kappa$. This is because when the near user $U_2$ moves more away from the H-AP, $U_2$ suffers from more severe signal attenuation as $\kappa$ increases, and thus it is necessary to allocate more time to $U_2$ for both transmitting own information and relaying information for $U_1$ in order to maximize the common throughput with $R_1({\bf{t}}^*) = R_2({\bf{t}}^*)$.

\section{Simulation Result}\label{SimulationResult}
In this section, we compare the maximum common throughput in the WPCN with versus without user cooperation under the practical fading channel setup, while the other system parameters are set similarly as for Figs. \ref{Fig_ThroughputRegion_k_05_Various_a} and \ref{Fig_ThroughputRegion_a_2_Various_k}. The short-term fading in the network is assumed to be Rayleigh distributed, and thus $\theta_{10}$, $\theta_{20}$, and $\theta_{12}$ in the previously given channel models are exponentially distributed with unit mean.

\begin{figure}
   \centering
   \includegraphics[width=0.7\columnwidth]{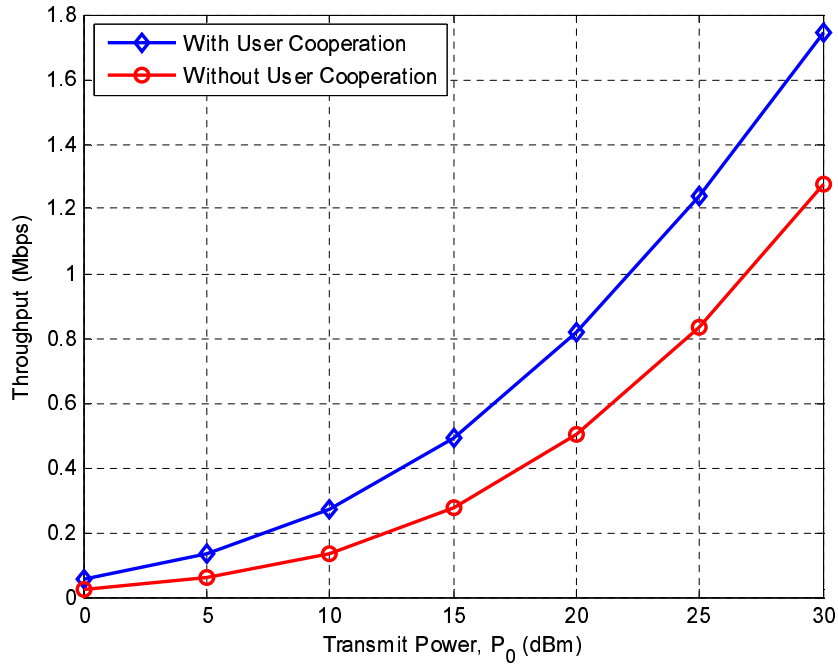}
   \caption{Maximum common-throughput versus $P_0$ with $\alpha =2$ and $\kappa = 0.5$. }
   \label{Fig_AvgThroughput_vs_Pa}
\end{figure}

Fig. \ref{Fig_AvgThroughput_vs_Pa} shows the maximum average common-throughput versus the transmit power of H-AP, i.e., $P_0$ in dBm, with $\alpha = 2$ and $\kappa = 0.5$. It is observed that the maximum common-throughput in the WPCN with user cooperation is notably larger than that without user cooperation, especially when $P_0$ becomes large. This result shows the effectiveness of the proposed user cooperation in the WPCN to further improve both the throughput and user fairness as compared to the baseline scheme in \cite{Ju_Throughput_for_WPCN} with optimized time allocation only but without user cooperation.

Fig. \ref{Fig_AvgThroughput_vs_k} shows the maximum average common-throughput versus different values of $\kappa$ with $P_0 = 30$dBm. It is observed that the maximum common-throughput in the WPCN with user cooperation is always larger than that without user cooperation. Furthermore, the common-throughput in the WPCN with user cooperation first increases over $\kappa$, but decreases with increasing $\kappa$ when $\kappa$ is larger than a certain threshold. The threshold value of $\kappa$ that maximizes the average common-throughput of the WPCN with user cooperation is observed to increase over $\alpha$.

\begin{figure}
   \centering
   \includegraphics[width=0.7\columnwidth]{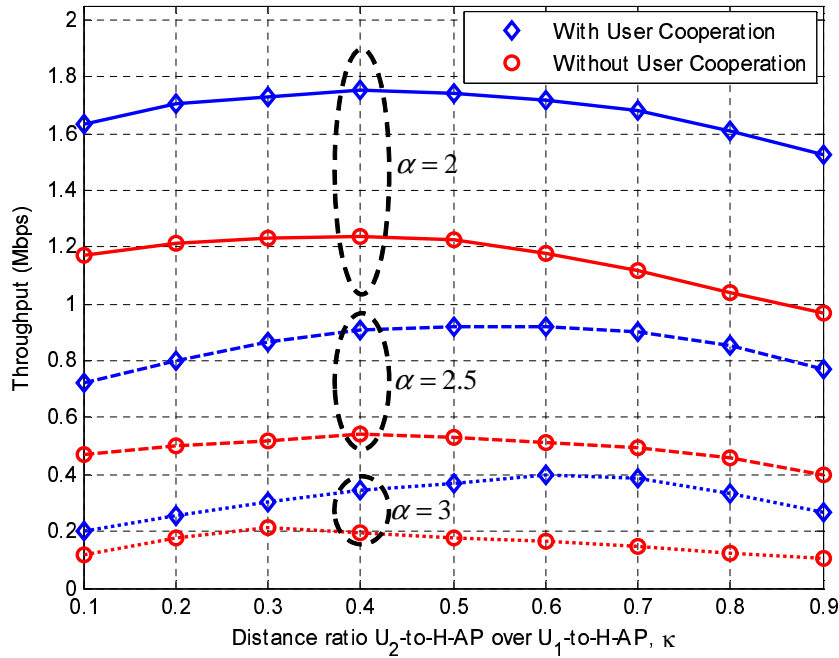}
   \caption{Maximum common-throughput versus $\kappa$ with $P_0=30$dBm and $\alpha =2$, $2.5$, $3$. }
   \label{Fig_AvgThroughput_vs_k}
\end{figure}

\section{Conclusion} \label{Conclusion}
This paper studied a two-user WPCN in which user cooperation is jointly exploited with resources (time, power) allocation to maximize the network throughput and at the same time achieve desired user fairness by overcoming the doubly near-far problem. We characterized the maximum WSR in the WPCN with user cooperation via a problem reformulation and applying the tools from convex optimization. By comparing the achievable throughput regions as well as the maximum common-throughput in the WPCN with versus without user cooperation, it is shown by extensive simulations that the proposed user cooperation is effective to improve both the throughput and user fairness. In future work, we will extend the results of this paper to other setups, e.g., when there are more than two users, alternative relaying schemes are applied, and/or other performance metrics are considered.

\appendices

   \section{Proof of Lemma \ref{Lemma_Concavity}}\label{App_Proof_Lemma_Concavity}
   To prove Lemma \ref{Lemma_Concavity}, we use the following lemma.

   \begin{lemma}\label{Lemma_Concavity_General}
      For two variables $x \ge 0$ and $x_2 \ge 0$, a function $g(x,y)$ defined as
      \begin{equation}\label{Eq_App_Lemma_Concavity}
         {g\left( x_1,x_2 \right) \buildrel \Delta \over =   \left\{ {\begin{array}{*{20}{c}}
         {x_1\log \left( {1 + \alpha \frac{x_2}{x_1}} \right)}  \\
         0  \\
         \end{array}\,\,\,\begin{array}{*{20}{c}}
         {,\,\,\,x > 0}  \\
         {,\,\,\,x = 0}  \\
         \end{array}} \right.}
      \end{equation}
      is a jointly concave function of both $x_1$ and $x_2$.
   \end{lemma}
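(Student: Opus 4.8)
The plan is to recognize $g$ as the \emph{perspective} of a one-dimensional concave function and then invoke the standard fact that the perspective operation preserves concavity \cite{ConvexOptimization}. Concretely, I would introduce the single-variable function $\phi(u) \buildrel \Delta \over = \log(1 + \alpha u)$ on $u \ge 0$ and observe that, with $x_1$ playing the role of the scaling argument,
\[
   x_1 \, \phi\!\left( \frac{x_2}{x_1} \right) = x_1 \log\!\left( 1 + \alpha \frac{x_2}{x_1} \right) = g(x_1, x_2), \qquad x_1 > 0 .
\]
Since $\phi''(u) = -\alpha^2 /(1 + \alpha u)^2 < 0$, the base function $\phi$ is concave on its domain, and hence its perspective $g$ is jointly concave on the open region $\{(x_1, x_2): x_1 > 0, \, x_2 \ge 0\}$. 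This disposes of the interior of the domain in one stroke.

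The only delicate point, which I expect to be the main obstacle, is extending concavity across the boundary $x_1 = 0$, where $g$ is defined to equal $0$ and where neither the perspective theorem nor any Hessian test applies directly. I would handle it through continuity: using the elementary limit $\lim_{s \to 0^+} s \log(1/s) = 0$, one checks that $\lim_{x_1 \to 0^+} g(x_1, x_2) = 0$ for every fixed $x_2 \ge 0$, so the assigned value $g(0, x_2) = 0$ makes $g$ continuous on the whole closed quadrant. A function that is concave on a convex open set and extends continuously to the boundary of a convex closure remains concave there; formally, one writes the concavity inequality for an arbitrary convex combination of two interior points and passes to the limit as one of them approaches the boundary axis, the continuity just established guaranteeing that the limiting inequality is still valid.

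As a self-contained alternative that avoids citing the perspective theorem, I would verify negative semidefiniteness of the Hessian directly on $x_1 > 0$. Writing $u = x_2/x_1$, a short differentiation yields
\[
   \nabla^2 g = \frac{\alpha^2}{x_1 \, (1 + \alpha u)^2} \begin{pmatrix} -u^2 & u \\ u & -1 \end{pmatrix} ,
\]
whose matrix factor has nonpositive diagonal entries and determinant $(-u^2)(-1) - u^2 = 0$, hence is negative semidefinite, while the leading scalar is positive for $x_1 > 0$. Concavity on $x_1 > 0$ follows, and the boundary $x_1 = 0$ is then closed off by the same continuity argument as above. Either route reduces the lemma to a routine concavity check plus the single nontrivial step of the boundary extension, so I would present the perspective argument as the main line and keep the Hessian computation in reserve.
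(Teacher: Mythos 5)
Your proposal is correct, and your main line of argument takes a genuinely different route from the paper's. The paper proves the lemma by direct computation: it writes out the Hessian of $g$ for $x_1 > 0$ entrywise (the same matrix you obtain in factored form) and establishes negative semidefiniteness by exhibiting the quadratic form as a negative perfect square, $\mathbf{v}^T \nabla^2 g \, \mathbf{v} = -\frac{\alpha^2}{x_1\left(1+\alpha \frac{x_2}{x_1}\right)^2}\left(\frac{x_2}{x_1}v_1 - v_2\right)^2 \le 0$; your reserve Hessian argument is the identical computation, with negative semidefiniteness checked instead via the nonpositive diagonal entries and zero determinant of the $2\times 2$ factor, which is an equivalent test. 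Your primary route---recognizing $g$ as the perspective of the concave function $\phi(u)=\log(1+\alpha u)$ and invoking the perspective theorem from the very reference the paper already cites---is cleaner: it needs no differentiation and makes transparent that the result holds with $\log(1+\alpha u)$ replaced by any concave function, which is the structural reason all four rate functions in Lemma 3.1 are concave at once. You also address a point the paper passes over silently: both the Hessian test and the perspective theorem cover only the open region $x_1>0$, whereas the lemma asserts concavity on the closed quadrant including the boundary $x_1=0$ where $g$ is defined to be $0$. Your continuity argument (using $x_1\log(1+\alpha x_2/x_1)\to 0$ as $x_1\to 0^+$ and passing the concavity inequality to the limit) closes that gap explicitly, so on this point your write-up is strictly more careful than the paper's own proof.
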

   \begin{proof}
      Please refer to Appendix \ref{App_Proof_Lemma_Concavity_General}.
   \end{proof}

   Note that ${R_1^{(10)}}\left( { {\bf{t}} } \right)$, ${R_1^{(12)}}\left( { {\bf{t}} } \right)$, ${R_1^{(20)}}\left( { {\bf{t}} } \right)$, and ${R_{2}}\left( { {\bf{t}} } \right)$ are all functions of only two elements in ${\bf{t}} = [ \tau_0,$ $\tau_1$, $\tau_{21},$ $\tau_{22},$ $t_{21},$ $t_{22} ]$, all of which have the equivalent form as (\ref{Eq_App_Lemma_Concavity}). Therefore, ${R_1^{(10)}}\left( { {\bf{t}} } \right)$, ${R_1^{(12)}}\left( { {\bf{t}} } \right)$, ${R_1^{(20)}}\left( { {\bf{t}} } \right)$, and ${R_{2}}\left( { {\bf{t}} } \right)$ are all concave functions of ${\bf{t}}$. This completes the proof.

   \section{Proof of Lemma \ref{Lemma_Concavity_General}}\label{App_Proof_Lemma_Concavity_General}
   Denote the Hessian of $g(x_1,x_2)$ defined in (\ref{Eq_App_Lemma_Concavity}) as 
   \begin{equation}\label{Eq_App_fungtion_g}
      {{\nabla ^2}g\left( {{x_1},{x_2}} \right) = \left[ {{d_{i,j}}} \right],\,\,\,i,j \in \left\{ {1,2} \right\},}
   \end{equation}
   where $d_{i,j}$ is given by 
   \begin{equation}\label{Eq_App_Hessian_function_g}
      {{d_{i,j}} = \left\{ {\begin{array}{*{20}{c}}
      { - \frac{{{\alpha ^2}x_2^2}}{{x_1^3{{\left( {1 + \alpha \frac{{{x_2}}}{{{x_1}}}} \right)}^2}}}\,\,\,,\,\,\,i = j = 1\,\,\,}  \\
      {\frac{{{\alpha ^2}{x_2}}}{{x_1^2{{\left( {1 + \alpha \frac{{{x_2}}}{{{x_1}}}} \right)}^2}}}\,\,\,,\,\,\,\,\,\,\,\,\,i \ne j\,\,\,}  \\
      { - \frac{{{\alpha ^2}}}{{{x_1}{{\left( {1 + \alpha \frac{{{x_2}}}{{{x_1}}}} \right)}^2}}}\,\,\,,\,\,\,\,\,i = j = 2.}  \\
      \end{array}\,\,} \right.}
   \end{equation}
   
   Given an arbitrary real vector ${\bf{v}} = [v_1, \,\, v_2]^T$, it can be shown from (\ref{Eq_App_fungtion_g}) and (\ref{Eq_App_Hessian_function_g}) that 
   \begin{equation}\label{Eq_App_Concavity_function_g}
      {{{\bf{v}}^T}{\nabla ^2}g\left( {{x_1},{x_2}} \right){\bf{v}} =  - \frac{{{\alpha ^2}}}{{{x_1}{{\left( {1 + \alpha \frac{{{x_2}}}{{{x_1}}}} \right)}^2}}}{\left( {\frac{{{x_2}}}{{{x_1}}}{v_1} - {v_2}} \right)^2} \le 0,}
   \end{equation}
   i.e., ${\nabla ^2}g\left( {{x_1},{x_2}} \right)$ is a negative semi-definite matrix. Therefore, $g\left( {{x_1},{x_2}} \right)$ is a jointly concave function of both $x_1$ and $x_2$ \cite{ConvexOptimization}. This completes Lemma \ref{Lemma_Concavity_General}.

   \section{Proof of Proposition \ref{Proposition_Opt_Solution}}\label{App_Proof_Prop_Opt_Solution}   
   Since (P2) is a convex optimization problem for which the strong duality holds, the Karush-Kuhn-Tucker (KKT) conditions are both necessary and sufficient for the global optimality of (P2), which is shown below.
   \[\]
   \begin{equation}\label{Eq_App_Prop_KKT_Deri_Rbar}
      {\frac{\partial }{{\partial \bar R^*}} \mathcal L = {\omega _1} - {\lambda _3^*} - {\lambda _4^*} = 0,}
   \end{equation}
   \begin{equation}\label{Eq_App_Prop_KKT_Deri_tau0}
      {\frac{\partial }{{\partial \tau_0^*}} \mathcal L = \frac{1}{{\ln 2}}\left( {\frac{{\lambda _3^*\rho _1^{(10)}}}{{1 + \rho _1^{(10)}\frac{{\tau _0^*}}{{\tau _1^*}}}} + \frac{{\lambda _4^*\rho _1^{(12)}}}{{1 + \rho _1^{(12)}\frac{{\tau _0^*}}{{\tau _1^*}}}}} \right) - \lambda _1^* + \lambda _2^* = 0,}
   \end{equation}
   \begin{equation}\label{Eq_App_Prop_KKT_Deri_tau1}
      {\frac{\partial }{{\partial \tau_1^*}} \mathcal L = \frac{{\lambda _3^*}}{{\ln 2}}\left( {\ln \left( {1 + \rho _1^{(10)}\frac{{\tau _0^*}}{{\tau _1^*}}} \right) + \frac{{\rho _1^{(10)}\frac{{\tau _0^*}}{{\tau _1^*}}}}{{1 + \rho _1^{(10)}\frac{{\tau _0^*}}{{\tau _1^*}}}}} \right) + \frac{{\lambda _4^*}}{{\ln 2}}\left( {\ln \left( {1 + \rho _1^{(12)}\frac{{\tau _0^*}}{{\tau _1^*}}} \right) - \frac{{\rho _1^{(12)}\frac{{\tau _0^*}}{{\tau _1^*}}}}{{1 + \rho _1^{(12)}\frac{{\tau _0^*}}{{\tau _1^*}}}}} \right) - \lambda _1^* = 0,}
   \end{equation}
   \begin{equation}\label{Eq_App_Prop_KKT_Deri_tau21}
      {\frac{\partial }{{\partial \tau_{21}^*}} \mathcal L = \frac{{\lambda _3^*}}{{\ln 2}}\left( {\ln \left( {1 + {\rho _2}\frac{{t_{21}^*}}{{\tau _{21}^*}}} \right) - \frac{{{\rho _2}\frac{{t_{21}^*}}{{\tau _{21}^*}}}}{{1 + {\rho _2}\frac{{t_{21}^*}}{{\tau _{21}^*}}}}} \right) - \lambda _1^* = 0,}
   \end{equation}
   \begin{equation}\label{Eq_App_Prop_KKT_Deri_tau22}
      {\frac{\partial }{{\partial \tau_{22}^*}} \mathcal L = \frac{{{\omega _2}}}{{\ln 2}}\left( {\ln \left( {1 + {\rho _2}\frac{{t_{22}^*}}{{\tau _{22}^*}}} \right) - \frac{{{\rho _2}\frac{{t_{22}^*}}{{\tau _{22}^*}}}}{{1 + {\rho _2}\frac{{t_{22}^*}}{{\tau _{22}^*}}}}} \right) - \lambda _1^* = 0,}
   \end{equation}
   \begin{equation}\label{Eq_App_Prop_KKT_Deri_E21}
      {\frac{\partial }{{\partial t_{21}^*}} \mathcal L = \frac{{\lambda _3^*}}{{\ln 2}}\frac{{{\rho _2}}}{{1 + {\rho _2}\frac{{t_{21}^*}}{{\tau _{21}^*}}}} - \lambda _2^* = 0,}
   \end{equation}
   \begin{equation}\label{Eq_App_Prop_KKT_Deri_E22}
      {\frac{\partial }{{\partial t_{22}^*}} \mathcal L = \frac{{{\omega _2}}}{{\ln 2}}\frac{{{\rho _2}}}{{1 + {\rho _2}\frac{{t_{22}^*}}{{\tau _{22}^*}}}} - \lambda _2^* = 0,}
   \end{equation}
   \begin{equation}\label{Eq_App_Prop_KKT_Slack1}
      \lambda_1^* \left(\tau_0^* + \tau_1^* + \tau_{21}^* + \tau_{22}^* - 1\right) = 0,
   \end{equation}
   \begin{equation}\label{Eq_App_Prop_KKT_Slack2}
      \lambda_2^* \left( t_{21}^* + t_{22}^* - \tau_0^*\right) = 0,
   \end{equation}
   \begin{equation}\label{Eq_App_Prop_KKT_Slack3}
      \lambda_3^* \left( \bar R^* - {R_1^{(10)}}\left( { {\bf{t}}^* } \right)  - {R_1^{(20)}}\left( { {\bf{t}}^* } \right) \right) = 0,
   \end{equation}
   \begin{equation}\label{Eq_App_Prop_KKT_Slack4}
      \lambda_4^* \left( \bar R^* - {R_1^{(12)}}\left( { {\bf{t}}^* } \right) \right) = 0.
   \end{equation}
   
   Since $t_{21}^* + t_{22}^* = \tau_0^*$ must hold for (P2), we assume without loss of generality that $\lambda_2^* > 0$ ($\lambda_2^* = 0$ only when $t_{21}^* = t_{22}^* = 0$ in (\ref{Eq_App_Prop_KKT_Deri_E21}) and (\ref{Eq_App_Prop_KKT_Deri_E22}), i.e., no harvested energy at $U_2$ is used for UL WIT). Furthermore, it can be easily verified that $\tau_0^* + \tau_1^* + \tau_{21}^* + \tau_{22}^* = 1$ must hold for (P2) and thus we can also assume that $\lambda_1^* > 0$ with no loss of generality ($\lambda_1^* = 0$ only when $\tau_0^* = t_{21}^* = t_{22}^* = 0$ from (\ref{Eq_App_Prop_KKT_Deri_tau21}) and (\ref{Eq_App_Prop_KKT_Deri_tau22}) i.e., no energy is transferred by the H-AP). 
   
   Changing variable as $z_1 = \frac{\tau_0^*}{\tau_1^*}$ and after mathematically manipulations, (\ref{Eq_App_Prop_KKT_Deri_tau0}) can be modified as $a z_1^2 + b z_1 + c = 0$, where $a$, $b$, and $c$ are given in (\ref{Eq_Prop_a})-(\ref{Eq_Prop_c}). Since $\tau_0^* \ge 0$, we thus have $\tau_0^* = \frac{\tau_1^*}{2a} \left( {\sqrt{b^2 - 4ac}} - b \right)$ in (\ref{Eq_Prop_Opt_tau}) from quadratic formula. Furthermore, with $z_1 = \frac{\tau_0^*}{\tau_1^*}$ and from (\ref{Eq_App_Prop_KKT_Deri_tau1}), we also have $\lambda_3^* f( \rho_1^{(10)} z_1 ) + \lambda_4^* f( \rho_1^{(12)} z_1 ) = \lambda_1^* \ln 2$, where $f(z)$ is given in (\ref{Eq_Prop_Function_z}). It is worth noting that $f(z)$ given in (\ref{Eq_Prop_Function_z}) is a monotonically increasing function of $z \ge 0$ where $f(0) = 0$, and so is $\lambda_3^* f( \rho_1^{(10)} z_1) + \lambda_4^* f( \rho_1^{(12)} z_1 )$. Therefore, there exists a unique $z_1^*$ satisfying $\lambda_3^* f( \rho_1^{(10)} z_1 ) + \lambda_4^* f( \rho_1^{(12)} z_1 ) = \lambda_1^* \ln 2$ for given $\lambda_1 > 0$, from which we have $\tau_1^*$ given in (\ref{Eq_Prop_Opt_tau}). Similarly, by changing variables as $z_{21}^* = \rho_2\frac{t_{21}^*}{\tau_{21}^*}$ and $z_{22}^* = \rho_2 \frac{t_{22}^*}{\tau_{22}^*}$ in (\ref{Eq_App_Prop_KKT_Deri_tau21}) and (\ref{Eq_App_Prop_KKT_Deri_tau22}), respectively, we can obtain unique $z_{21}^*$ and $z_{22}^*$ which are solutions of $f\left( z_{21} \right) = \frac{\lambda_1^* \ln 2}{\lambda_3^*}$ and $f\left( z_{22} \right) = \frac{\lambda_1^* \ln 2}{\omega_2}$, from which $\tau_{21}^*$ and $\tau_{22}^*$ can be obtained. Finally, we have $t_{21}^*$ and $t_{22}^*$ in (\ref{Eq_Prop_Opt_E}) from (\ref{Eq_App_Prop_KKT_Deri_E21}) and (\ref{Eq_App_Prop_KKT_Deri_E22}), respectively. This completes the proof of Proposition \ref{Proposition_Opt_Solution}.

   \clearpage


\begin{thebibliography}{1}
\bibliographystyle{IEEEbib}


   \bibitem{Zhou}
      X. Zhou, R. Zhang, and C. K. Ho, ``Wireless information and power transfer: architecture design and rate-energy tradeoff,'' \emph{IEEE Trans. Commun.}, vol. 61, no. 11, pp.4757-4767, Nov. 2013.

   \bibitem{Liu}
      L. Liu, R. Zhang, and K. C. Chua, ``Wireless information transfer with opportunistic energy harvesting,'' \emph{IEEE Trans. Wireless Commun.}, vol. 12, no. 1, pp. 288-300, Jan. 2013.

   \bibitem{Zhang}
      R. Zhang and C. K. Ho, ``MIMO broadcasting for simultaneous wireless information and power transfer,'' \emph{IEEE Trans. Wireless Commun.}, vol. 12, no. 5, pp. 1989-2001, May 2013.

   \bibitem{Huang}
      K. Huang and V. K. N. Lau, ``Enabling wireless power transfer in cellular networks: architecture, modeling and deployment,'' \emph{IEEE Trans. Wireless Commun.}, vol. 13, no. 2, pp. 902-912, Feb. 2014.

   \bibitem{Lee}
      S. H. Lee, R. Zhang, and K. B. Huang, ``Opportunistic wireless energy harvesting in cognitive radio networks,'' \emph{IEEE Trans. Wireless Commun.}, vol. 12, no. 9, pp. 4788-4799, Sep. 2013.

   \bibitem{Ju_Throughput_for_WPCN}
      H. Ju and R. Zhang, ``Throughput maximization in wireless powered communication networks,'' \emph{IEEE Trans. Wireless Commun.,} vol. 13, no. 1, pp. 418-428, Jan. 2014.



   \bibitem{Sendonaris}
      A. Sendonaris, E. Erkip, and B. Aazhang, ``User cooperation diversity: Part I and Part II,'' \emph{IEEE Trans. Commun.}, vol. 51, no. 11, pp. 1927-1948, Nov. 2003.


   \bibitem{Liang}
      Y. Liang and V. V. Veeravalli, ``Gaussian orthogonal relay channels: optimal resource allocation and capacity,'' \emph{IEEE Trans. Inform. Theory}, vol. 51, no. 9, pp. 3284-3289, Sep. 2005.

   \bibitem{C_Huang}
      C. Huang, R. Zhang, and S. Cui, ``Throughput maximization for the Gaussian relay channel with energy harvesting constraints,'' \emph{IEEE J. Sel. Areas Commun.}, vol. 31, no. 8, pp. 1469-1479, Aug. 2013.

   \bibitem{Gurakan}
      B. Gurakan, O. Ozel, J. Yang, and S. Ulukus, ``Energy cooperation in energy harvesting wireless communications,'' in \emph{Proc. IEEE Inter. Symp. Inform. Theory}, Cambridge, MA, USA, July 2012.

   \bibitem{Nasir}
      A. A. Nasir, X. Zhou, S. Durrani, and R. A. Kennedy, ``Relaying protocols for wireless energy harvesting and information processing,'' \emph{IEEE Trans. Wireless Commun.}, vol. 12, no. 7, pp. 3622-3636, July 2013.



   \bibitem{Ju_ArXiv}
      H. Ju and R. Zhang, ``User cooperation in wireless powered communication networks,'' \emph{ArXiv preprint}, available online at arXiv:1403:7123.


   \bibitem{ConvexOptimization}
      S. Boyd and L. Vandenberghe, \emph{Convex Optimization}. Cambridge University Press, 2004.

   \bibitem{LectureNote}
      S. Boyd, EE364b Lecture Notes. Stanford, CA: Stanford Univ., avaliable online at ${\rm{http://www}}{\rm{.stanford.edu/class/ee364b/lectures/}}$ ${\rm{ellipsoid\_method\_slides}}{\rm{.pdf}}$.




\end{thebibliography}
\end{document}